\newcommand\ie{\emph{i.e.}\xspace}
\newtheorem{definition}{Definition}
\newtheorem{lemma}{Lemma}
\newtheorem{theorem}{Theorem}
\newtheorem{conjecture}{Conjecture}
\title{Towards the Automated Generation of Focused Proof Systems}
\author{Vivek Nigam
\institute{Federal University of Para\'{i}ba, Brazil}
\email{vivek.nigam@gmail.com}
\and
Giselle Reis
\institute{Inria \& LIX, France}
\email{giselle.reis@inria.fr}
\and
Leonardo Lima
\institute{Federal University of Para\'{i}ba, Brazil}
\email{leonardo.alfs@gmail.com}
}
\begin{document}
\maketitle

\begin{abstract}
This paper tackles the problem of formulating and proving the completeness of
focused-like proof systems in an automated fashion. Focusing is a discipline on
proofs which structures them into phases in order to reduce proof search
non-determinism. We demonstrate that it is possible to construct a
complete focused proof system from a given un-focused proof system if it
satisfies some conditions. Our key idea is to generalize the completeness proof
based on permutation lemmas given by Miller and Saurin for the focused linear
logic proof system. This is done by building a graph from the rule permutation
relation of a proof system, called permutation graph. We then show that from the
permutation graph of a given proof system, it is possible to construct a
complete focused proof system, and additionally infer for which formulas
contraction is admissible. An implementation for building the permutation
graph of a system is provided. We apply our technique to generate
the focused proof systems MALLF, LJF and LKF for linear, intuitionistic and
classical logics, respectively.
\end{abstract}


\section{Introduction}


In spite of its widespread use, the proposition and completeness
proofs of focused proof systems are still an \emph{ad-hoc} and hard task, done
for each individual system separately. For example, the original completeness
proof for the focused linear logic proof system (LLF)~\cite{andreoli92jlc} is
very specific to linear logic. The completeness proof for many focused proof
systems for intuitionistic logic, such as LJF~\cite{liang07csl}, LKQ and
LKT~\cite{danos93wll}, are obtained by using non-trivial encodings of
intuitionistic logic in linear logic.

One exception, however, is the work of Miller and Saurin~\cite{miller07cslb},
where they propose a modular way to prove the completeness of focused proof
systems based on permutation lemmas and proof transformations. They show that a
given focused proof system is complete with respect to its unfocused version by
demonstrating that any proof in the unfocused system can be transformed into a
proof in the focused system. Their proof technique has been successfully adapted
to prove the completeness of a number of focused proof systems based on
linear logic, such as ELL~\cite{saurin08phd}, $\mu$MALL~\cite{baelde08phd} and
SELLF~\cite{nigam09phd}.

This paper proposes a method for the automated generation of a sound and complete focused
proof system from a given unfocused sequent calculus proof system. Our approach
uses as theoretical foundations the modular proof given by Miller and
Saurin~\cite{miller07cslb}. There are, however, a number of challenges in
automating such a proof for any given unfocused proof system:
%
  (1) Not all proof systems seem to admit a focused version.
  We define sufficient conditions based on the definitions in \cite{miller07cslb};
%
  (2) Even if a proof system satisfies such conditions, there are many design 
  choices when formulating a focused version for a system;
%
  (3) Miller and Saurin's proof cannot be directly applied to proof
  systems that have contraction and weakening rules, such as LJ;
  Focused proof systems, such as LJF and LKF, allow only the contraction of some
  formulas. This result was obtained by non-trivial encodings in linear
  logic~\cite{liang09tcs}. Here, we demonstrate that this can be
  obtained in the system itself, \ie, without a detour through linear logic;
  (4) Miller and Saurin did not formalize why their procedure or transforming
  an unfocused proof into a focused one terminates. 
  It already seems challenging to do so for MALL as permutations are not necessarily
  size preserving (with respect to the number of inferences). 
  We are still investigating general conditions and this is left to future work.

In order to overcome these challenges, we introduce in Section~\ref{sec:focusing} 
the notion of permutation graphs. 
Our previous work~\cite{nigam13iclp,nigam14ijcar} showed how to check
whether a rule permutes over another in an automated fashion. We use these
results to construct the permutation graph of a proof system. This paper then
shows that, by analysing the permutation graph of an unfocused proof system, we
can construct possibly different focused versions of this system, all sound and 
complete (provided a proof of termination is given). 
We sketch in Section~\ref{sec:contraction} how to check the admissibility of contraction rules.

\section{Permutation Graphs}
\label{sec:perm_graphs}


In the following we assume that we are given a sequent calculus proof system
$\mathbb{S}$ which is commutative, \ie, sequents are formed by multi-sets
of formulas, and whose non-atomic initial and cut rules are admissible. 
%
%
%
We will also assume
that whenever contraction is allowed then weakening is also allowed, that is,
our systems can be affine, but not relevant. Finally, we assume the reader is 
familiar with basic proof theory terminology, such as main and auxiliary formulas, 
formula ancestors. 





\begin{definition}[Permutability]
Let $\alpha$ and $\beta$ be two inference rules in a sequent calculus system
$\mathbb{S}$. 
We will say that $\alpha$ \emph{permutes up} $\beta$, denoted by
$\alpha \uparrow \beta$, if for every $\mathbb{S}$ derivation of a sequent
$\mathcal{S}$ in which $\alpha$ operates on $\mathcal{S}$ and $\beta$ operates
on one or more of $\alpha$'s premises (but not on auxiliary formulas of $\alpha$), 
%
there exists another $\mathbb{S}$ derivation of $\mathcal{S}$ in which $\beta$
operates on $\mathcal{S}$ and $\alpha$ operates on zero or more of $\beta$'s
premises (but not on $\beta$'s auxiliary formulas).
Consequently, $\beta$ \emph{permutes down} $\alpha$ ($\beta \downarrow \alpha$). 
\end{definition}

Note that if there is no derivation in which $\beta$ operates on $\alpha$'s
premises without acting on its auxiliary formulas (e.g., $\vee_r$ and $\wedge_r$
in LJ), the permutation holds vacuously.
%


\begin{definition}[Permutation graph]
Let $\mathcal{R}$ be the set of inference rules of a sequent calculus system
$\mathbb{S}$. We construct the (directed) \emph{permutation graph} $P_{\mathbb{S}}=(V, E)$
for $\mathbb{S}$ by taking $V = \mathcal{R}$ and $E = \{ (\alpha, \beta) \;|\;
\alpha \uparrow \beta \}$.
\end{definition}
%

\begin{definition}[Permutation cliques]
Let $\mathbb{S}$ be a sequent calculus system and $P_\mathbb{S}$ its permutation
graph. Consider $P^*_\mathbb{S}=(V^*, E^*)$ the undirected graph obtained from
$P_\mathbb{S}=(V,E)$ by taking $V^* = V$ and $E^* = \{ (\alpha, \beta) \;|\;
(\alpha, \beta) \in E \text{ and } (\beta, \alpha) \in E \}$. Then the
\emph{permutation cliques} of $\mathbb{S}$ are the maximal cliques\footnote{A
clique in a graph $G$ is a set of vertices such that all vertices are pairwise
connected by one edge.} of $P^*_\mathbb{S}$.
\end{definition}

For LJ, we obtain the following cliques
$
CL_1 = \{ \wedge_l, \vee_l, \rightarrow_r, \vee_r \} 
$ and
$
CL_2 = \{ \wedge_r, \vee_r, \rightarrow_l \}
$.

Permutation cliques can be thought of as equivalence classes for inference rules. 
For example, the rule $\wedge_l$ permutes over all rules in $CL_1$. Permutation 
cliques are not always disjoint. For example, the rule $\vee_r$ appears in both 
cliques.

\begin{definition}[Permutation partition]
Let $\mathbb{S}$ be a proof system and $P_\mathbb{S}$ its permutation graph.
Then a \emph{permutation partition} $\mathcal{P}$ is a partition of
$P_\mathbb{S}$ such that each component is a complete graph. We will call each
component of such partitions a \emph{permutation component}, motivated by the
fact that inferences in the same component permute over each other.
\end{definition}

It is always possible to find such a partition by taking each component to be
one single vertex, but we are mostly interested in bi-partitions.

Although in general cliques are computed in exponential time, it is still
feasible to compute them since the permutation graph is usually small. The
partitions can be obtained simply by choosing at most one partition to those
rules present in more than one clique. Therefore, there might 
be many possible ways to partition the rules of a system. In what
follows (Definition~\ref{def:focusable_partition}) we will define which are the 
partitions that will yield a focused proof system. As we will see, the following 
partition will lead to LJF, restricted to multiplicative conjunctions:  
$
C_1 = \{ \wedge_l, \vee_l, \rightarrow_r \}
$ and
$
C_2 = \{ \wedge_r, \vee_r, \rightarrow_l \}
$.

\begin{definition}[Permutation partition hierarchy]
Let $\mathbb{S}$ be a proof system, $P_\mathbb{S}$ its permutation graph and
$\mathcal{P} = C_1, ..., C_n$ a permutation partition. We say that $C_i \downarrow C_j$
iff for every inference $\alpha_i \in C_i$ and $\alpha_j \in C_j$ we have that
$\alpha_i \downarrow \alpha_j$, \ie, $\alpha_j \uparrow \alpha_i$ or
equivalently $(\alpha_j, \alpha_i) \in P_\mathbb{S}$.
\end{definition}

Notice that the partition hierarchy can be easily computed from the permutation graph. For the partition used above, we have $C_1 \downarrow C_2$.

\section{Focused Proof Systems Generation}
\label{sec:focusing}

We derive a focused proof system $\mathbb{S}^f$ from the permutation partitions of a given proof system $\mathbb{S}$ if some conditions
are fulfilled. In this section we explain these conditions and prove that the
induced focused system is sound and complete with respect to $\mathbb{S}$.

\begin{definition}[Focusable permutation partition]
\label{def:focusable_partition}
Let $\mathbb{S}$ be a sequent calculus proof system and $C_1, ..., C_n$ a
permutation partition of the rules in $\mathbb{S}$. We say that it is a
\emph{focusable permutation partition} if: 
\begin{itemize}
  \item $n = 2$ and $C_1 \downarrow C_2$;
  %
  \item Every rule in component $C_2$ has at most one auxiliary formula in each
  premise;
  \item Every non-unary rule in component $C_2$ splits the context among the
  premises (\ie, there is no implicit copying of context formulas on
  branching rules).
\end{itemize}
We call $C_1$ the negative component and $C_2$ the positive component following
usual terminology from the focusing literature (e.g. \cite{liang07csl}) and
classify formula occurrences in a proof as negative and positive according to
their \underline{introduction rules}.
\end{definition}

Observe that, in contrast to the usual approach, we do not assign polarities to
connectives on their own. Therefore the polarity of a formula can change
depending on whether it occurs on the right or on the left side of the sequent.
%
As for now, we will only define permutation partitions of logical inference
rules. The structural inference rules will be treated separately. In particular, 
the role of contraction and its relation to the partitions is discussed in 
Section \ref{sec:contraction}. 

The partition $\{C_1, C_2\}$ for LJ is a focusable permutation partition. 
Interestingly, LJ allows for other focusable partitions, for example:
$
C_1 = \{ \wedge_l, \vee_l, \rightarrow_r, \vee_r \}
$ and
$
C_2 = \{ \wedge_r, \rightarrow_l \}
$.

We conjecture that all proof systems derived from a focusable permutation
partition are sound and complete.
It is not our 
goal here to justify which partition leads to a more suitable focused proof 
system, as this would depend on the context where the proof system would be used. 


Based on the focusable permutation partition, we can define a focused proof
system for $\mathbb{S}$. This definition is syntactically different from those
usually present in the literature. It will, in particular, force the store and
subsequent selection of a negative formula. This extra step is only for the sake
of uniformity and clear separation between phases (there will always be a ``no
phase'' state between two phases). 

\begin{definition}[Focused proof system]
\label{def:focused-pf}
Let $\mathbb{S}$ be a sequent calculus proof system and $C_1 \downarrow C_2$
a focusable permutation partition of the rules in $\mathbb{S}$. Then we
can define the \emph{focused} system $\mathbb{S}^f$ in the following way:

\textbf{Sequents.} 
$\mathbb{S}^f$ sequents are of the shape $\Gamma ; \Gamma' \vdash^p \Delta ; \Delta'$,
where $p \in \{ +, -, 0 \}$ indicates a positive, negative and neutral polarity sequents
respectively. We will call $\Gamma'$ and $\Delta'$ the \emph{active} contexts.

\textbf{Inference Rules.}
For each rule $\alpha$ in $\mathbb{S}$ belonging to the negative (positive)
component, $\mathbb{S}^f$ will have a rule $\alpha$ with conclusion and premises
being negative (positive) sequents and main and auxiliary formulas occurring in
the active contexts.

\textbf{Structural rules.}
The connection between the phases is done via the following structural rules.

\emph{Selection rules} move a formula $F$ to the active context. If $F$ is
negative, then $p = -$. If $F$ is positive, then there is no negative $F' \in
\Gamma \cup \Delta$ and $p = +$. 
\emph{Store rules} remove a formula $F$ from the active context if $F$ is negative and $p = +$ or if $F$ is positive and $p = -$. 
The \emph{end rule} removes the label $p = \{ +, - \}$ of a sequent by setting it to 0
if the active contexts are empty.
{\footnotesize
\[
\infer[sel_l]{\Gamma, F ; \cdot \vdash^0 \Delta ; \cdot}{\Gamma ; F \vdash^p \Delta ; \cdot}
\qquad
\infer[sel_r]{\Gamma ; \cdot \vdash^0 \Delta, F ; \cdot}{\Gamma ; \cdot \vdash^p \Delta ; F}
\qquad
\infer[st_l]{\Gamma ; \Lambda, F \vdash^p \Delta ; \Pi}{\Gamma, F; \Lambda \vdash^p \Delta ; \Pi}
\qquad
\infer[st_r]{\Gamma ; \Lambda \vdash^p \Delta ; \Pi, F}{\Gamma ; \Lambda \vdash^p \Delta, F ; \Pi}
\qquad
\infer[end]{\Gamma ; \cdot \vdash^p \Delta ; \cdot}{\Gamma ; \cdot \vdash^0 \Delta ; \cdot}
\]
}%

\noindent
An $\mathbb{S}^f$ proof is characterized by sequences of inferences
labeled with $+$ or $-$ which we will call \emph{phases}. Thus, we
can say that selection rules are responsible for starting a phase and the end
rule finishes a phase. Between any two phases there is always a
``neutral'' state, denoted by a sequent labeled with 0.
\end{definition}

We can prove using the machinery given in \cite{miller07cslb} that the focused proof 
system obtained is complete. There is one catch, however: 
one also needs to prove that the procedure to convert an unfocused proof into a focused proof
using permutation lemmas terminates. This was not formalized in \cite{miller07cslb}, although
one can prove it. Finding general conditions is more challenging and is subject of current
investigation.

\begin{conjecture}[Completeness of focused proof systems]
\label{thm:completeness}
A sequent $\Gamma \vdash \Delta$ is provable in $\mathbb{S}$ iff the sequent
$\Gamma; \cdot \vdash^0 \Delta; \cdot$ is provable in $\mathbb{S}^f$.
\end{conjecture}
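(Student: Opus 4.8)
The plan is to prove the two directions separately. The soundness direction (if $\Gamma; \cdot \vdash^0 \Delta; \cdot$ is provable in $\mathbb{S}^f$, then $\Gamma \vdash \Delta$ is provable in $\mathbb{S}$) is the routine one: I would define an erasure map that takes an $\mathbb{S}^f$ sequent $\Gamma ; \Gamma' \vdash^p \Delta ; \Delta'$ to the $\mathbb{S}$ sequent $\Gamma, \Gamma' \vdash \Delta, \Delta'$, forgetting the polarity label $p$ and merging the active context into the stored context. Then I would induct on the height of the $\mathbb{S}^f$ derivation: every logical rule $\alpha$ of $\mathbb{S}^f$ maps to the corresponding rule $\alpha$ of $\mathbb{S}$ (the active-context restriction is just a bookkeeping device), and every structural rule ($sel_l$, $sel_r$, $st_l$, $st_r$, $end$) maps to the identity derivation, since it only moves a formula between the two halves of the context or changes the label. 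Hence the erasure of the whole $\mathbb{S}^f$ proof is an $\mathbb{S}$ proof of $\Gamma \vdash \Delta$.

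The completeness direction (if $\Gamma \vdash \Delta$ is provable in $\mathbb{S}$ then $\Gamma; \cdot \vdash^0 \Delta; \cdot$ is provable in $\mathbb{S}^f$) is the substantial one, and here I would follow the strategy of Miller and Saurin~\cite{miller07cslb} generalized via the permutation graph. Starting from an arbitrary $\mathbb{S}$ proof $\pi$, I would describe a rewriting procedure that transforms $\pi$ into a focused proof. The procedure alternates two kinds of normalization steps: (i) a \emph{negative saturation} step, which uses the permutations guaranteed by $C_1 \downarrow C_2$ (i.e. every rule of $C_1$ permutes down every rule of $C_2$, so negative-component rules can be pushed toward the root) to ensure that, bottom-up, all applicable $C_1$-rules on a sequent are applied before any $C_2$-rule — this realizes the negative phase; and (ii) a \emph{positive focusing} step, which uses the fact that rules within $C_2$ permute over each other, together with the focusable-partition hypotheses (each $C_2$-rule has at most one auxiliary formula per premise, and non-unary $C_2$-rules split the context), to show that once a $C_2$-rule introduces a formula $F$, the introduction of $F$'s active subformulas can be brought immediately above it, forming a connected positive phase on a single formula. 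The bookkeeping sequents of $\mathbb{S}^f$ with their $+/-/0$ labels and active contexts are exactly designed to record which phase the reconstructed derivation is in; the $sel$, $st$, and $end$ rules are inserted at the phase boundaries produced by the procedure.

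The main obstacle — and the reason the statement is labelled a \emph{conjecture} rather than a theorem — is termination of this transformation procedure. As the excerpt itself flags (challenge (4)), the permutations used are not size-preserving with respect to the number of inferences (for instance, permuting a branching rule past another branching rule can duplicate a subderivation), so there is no obvious decreasing measure, and even for MALL a termination argument is delicate. A secondary obstacle is the interaction with contraction and weakening: the transformation must only ever contract formulas that are atomic or introduced by $C_2$-rules (the contraction condition), so one needs the analysis of Section~\ref{sec:contraction} to guarantee that the unfocused proof can be massaged so that all other contractions are eliminated or pushed to the leaves before focusing is applied. Modulo a termination proof for the rewriting procedure and the contraction-admissibility analysis, the remaining steps — checking that each permutation step preserves provability (immediate from Definition~\ref{def:focusable_partition} and the permutability definition) and that a fully normalized proof is literally an $\mathbb{S}^f$ proof once the structural rules are inserted — are straightforward inductions on proof structure.
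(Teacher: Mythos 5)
Your proposal takes essentially the same route the paper envisions: soundness by erasing the polarity labels and merging the active contexts, and completeness via the Miller--Saurin permutation-based transformation of \cite{miller07cslb}, organized around the focusable partition $C_1 \downarrow C_2$ so that negative rules are pushed toward the root and positive phases are assembled by permutations within $C_2$. Like the paper, you stop short of a full proof at exactly the same point---termination of the rewriting procedure (and the treatment of contraction)---which is precisely why the statement is presented as a conjecture rather than a theorem, so your sketch matches the paper's intended argument and its acknowledged gap.
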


\section{Admissibility of contraction}
\label{sec:contraction}

During proof search, it is desirable to avoid unnecessary copying of formulas
at each rule application. Either by not copying the same context in all premises
or by not auto-contracting the main formula of a rule application. The analysis
of where the contraction rule lies in the permutation cliques can give us
insights on when it can be avoided. 

\begin{definition}[Admissibility of contraction]
Let $\mathbb{S}$ be a sequent calculus system with a set of rules $\mathcal{R}$.
We say that contraction is \emph{admissible} for $\mathcal{R}' \subseteq
\mathcal{R}$ if for every $\mathbb{S}$ derivation $\varphi$ there exists an
$\mathbb{S}$ derivation $\varphi'$ such that contraction is never applied to main
formulas of inferences in $\mathcal{R}'$.
\end{definition}

The intuitionistic system LJ is an example of a calculus in which
contraction is not admissible for all formulas. It is only complete if the main
formula of the implication left rule is contracted \cite{dyckhoff92jsl}.

The admissibility of contraction involves transformations which are similar to the
rank reduction rewriting rules of reductive cut-elimination. This is a special
case of permutation checking, since the upper inference \emph{must} be
applied to auxiliary formulas of the lower inference.

\begin{definition}[Contraction permutation]
Let $\mathbb{S}$ be a sequent calculus proof system, $c$ one of its contraction
rules and $\alpha$ a logical rule applied to a formula $F_\alpha$. We say that
$c \uparrow_c \alpha$ if a derivation composed by contraction of $F_\alpha$
followed by applications of $\alpha$ to the contracted formulas can be
transformed into a derivation where $\alpha$ is applied to $F_\alpha$ and
contraction is applied to the auxiliary formulas of $\alpha$.
\end{definition}

It is worth noting that many of the cases for contraction permutation rely on
$\alpha$ being applied to all contracted formulas in all premises where they
occur.
The proofs of such cases require a lemma stating that
$\alpha$ can be ``pushed down'' until the correct location.

If $c \uparrow_c \alpha$ for some inference $\alpha$, then it is admissible for
that inference, as it can always be replaced by contraction on its ancestors
To prove full admissibility of contraction in the
calculus, it is necessary to prove that contraction on atoms can be eliminated.
We will not address this issue in this paper, but we will analyse the behavior
of contraction among the phases in a focused proof. 

\begin{definition}[Admissibility of contraction in a phase]
Let $\mathbb{S}$ be a sequent calculus proof system and $C_1, C_2$ a focusable
permutation partition. We say that contraction is admissible in phase $i$ if
every $\mathbb{S}$ proof can be transformed into a proof where contraction is
never applied to main formulas of rules $\alpha \in C_i$.
\end{definition}

\begin{theorem}
Let $\mathbb{S}$ be a sequent calculus system,
$C$ its contraction rules,
$C_1, C_2$ a focusable permutation partition.
If for all $c \in C$ and
$\alpha \in C_i$, $c \uparrow \alpha$ and $c \uparrow_c \alpha$, then
contraction is admissible in phase $i$.
\end{theorem}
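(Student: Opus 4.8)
The plan is to adapt the rank-reduction phase of cut-elimination, using the two permutation properties as the only rewriting steps. Call a contraction instance $c$ in an $\mathbb{S}$ proof \emph{offending} (for phase $i$) when one of the two occurrences it merges is, higher up in the proof, the main formula of some rule $\alpha \in C_i$; what we must produce is a proof with no offending contraction. To each proof $\varphi$ I assign the finite multiset $\mu(\varphi)$ consisting of the sizes $|F|$ of the contracted formulas of the offending contractions of $\varphi$, and argue by well-founded induction on $\mu(\varphi)$ ordered by the multiset order. If $\mu(\varphi) = \emptyset$ there is nothing to prove; otherwise pick an offending contraction $c$ of maximal weight $|F|$ together with a witnessing rule $\alpha \in C_i$, chosen so that no further contraction of that same occurrence-thread lies strictly between $c$ and $\alpha$.

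The rewriting then has two moves. \emph{Relocation.} By the choice of $c$, the thread joining the main formula of $\alpha$ to $c$ is passive, so every inference strictly between $c$ and $\alpha$ operates on other formulas. Permuting $c$ upward past each of them — granted by the hypothesis $c \uparrow \beta$ when $\beta \in C_i$, and routine for the remaining logical and structural rules that do not touch that thread — or, dually, pushing $\alpha$ ``down to the correct location'' as anticipated by the remark after the definition of contraction permutation, brings $c$ immediately below $\alpha$. \emph{Elimination.} Once $c$ sits immediately below $\alpha$ and the premise of $c$ has been brought into the shape required by the definition of $c \uparrow_c \alpha$, that property applies and replaces the block ``$c$ then $\alpha$ on $F$'' by ``$\alpha$ on $F$ then contractions on the auxiliary formulas of $\alpha$'' — performed, for a branching $\alpha$, in every premise in which a copy occurs. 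Since $\alpha$ is a logical rule, each auxiliary formula is a proper subformula of $F$, so every contraction produced by this move has weight strictly below $|F|$.

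Neither move raises $\mu$ elsewhere in the proof: relocation merely transposes adjacent inferences, and elimination changes nothing below $\alpha$, nor above the copies of $F$, except for the duplication of side derivations. Hence the offending contraction $c$ of weight $|F|$ has been traded for offending contractions of strictly smaller weight, $\mu(\varphi)$ strictly decreases, and, iterating, we reach a proof with $\mu = \emptyset$, \ie one in which contraction is never applied to the main formula of a rule of $C_i$. Contractions on atoms may survive, but an atom is never the main formula of one of the logical rules of $C_i$, so they are harmless here; removing them is exactly the ``contraction on atoms'' problem the paper deliberately leaves open.

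The two points that need real care are the relocation move and the case analysis hidden inside $c \uparrow_c \alpha$ for branching rules of $C_i$: there the two contracted copies may be split across premises, $\alpha$ and the ensuing smaller contractions must be carried out in each of them, and the derivations above $\alpha$ get duplicated; one must check that the contractions travelling inside those duplicated subderivations do not spoil the multiset decrease — they are ``inherited'' offending contractions of non-maximal weight, or can be pushed up first. Isolating general conditions under which $c$ can always be slid past the intervening inferences — in particular its interaction with the structural rules and with rules of the other component — is, I expect, the main obstacle, and is the same sort of permutation bookkeeping on which the termination argument alluded to in Section~\ref{sec:focusing} rests.
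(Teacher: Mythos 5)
Your overall skeleton (multiset induction on the sizes of contracted formulas of ``offending'' contractions, a relocation step, then an application of $c \uparrow_c \alpha$) is plausible, and the move you mention only in passing --- pushing $\alpha$ \emph{down} to the contraction --- is the one the paper's own remarks point to (the ``pushed down'' lemma); the paper gives no detailed proof beyond those hints, so the comparison is against them. As written, though, your argument has genuine gaps. The relocation step is not licensed by the hypotheses: you are given $c \uparrow \beta$ only for $\beta \in C_i$, and contraction does \emph{not} in general permute above rules of the other component. In fact a focusable partition forces every non-unary rule of $C_2$ to split its context, and when the two contracted copies of $F$ are sent to different premises of such a rule the contraction simply cannot be moved above it --- this is not ``routine bookkeeping'' but the central obstruction, which is why the intended argument runs the other way: an $\alpha \in C_i$ can be permuted down past intervening inferences (inside $C_i$ because the component is a clique, past the other component when the hierarchy $C_1 \downarrow C_2$ applies), until it sits immediately above the contraction, at which point $\uparrow_c$ fires.

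Two further problems remain even granting relocation. First, $c \uparrow_c \alpha$ is defined for the configuration in which $\alpha$ has been applied to \emph{all} contracted copies (the paper stresses exactly this point), whereas your notion of offending contraction and your elimination step track only one copy; the cases where the second copy is weakened away, lives in another branch, or is decomposed by a rule of the \emph{other} component (possible here, since polarity is attached to introduction rules rather than connectives) are not handled. Second, the measure: both relocation and the $\uparrow_c$ step for branching rules duplicate subderivations, and if a duplicated segment contains an offending contraction of the same maximal weight $|F|$, the multiset $\mu$ does not strictly decrease; saying such contractions ``can be pushed up first'' assumes the very termination you are trying to establish --- and termination of this style of permutation rewriting is precisely what the paper flags as a nontrivial open issue. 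To close the proof you need either a measure sensitive to the positions of maximal-weight contractions or an induction on the derivation that eliminates all maximal offending contractions simultaneously.
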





The focused proof is obtained by only contracting formulas that can be introduced
by $C_2$ rules. It is easy to extend Definition~\ref{def:focused-pf} to enforce this
as done in LJF and LKF~\cite{liang07csl}.


%


\section{Case studies}
\label{sec:cases}
Given the permutation cliques, it is up to the user to analyse them and decide
which partition to use for the focused proof system. As case studies we will show 
how the focused proof systems LKF, LJF and MALLF
can be obtained from LK, LJ and MALL respectively using the permutation cliques.

\smallskip

\textbf{MALL}
MALL stands for multiplicative additive linear logic (without exponentials) 
and its rules are depicted in Figure \ref{fig:mall}.
A focused system, MALLF, for this calculus was proposed in \cite{andreoli92jlc}.
\begin{figure}[t]
{\scriptsize
\begin{align*}
\infer[\binampersand_l]{\Gamma, A_1 \binampersand A_2 \vdash \Delta}{\Gamma, A_i \vdash \Delta} 
\;\;\;
\infer[\otimes_l]{\Gamma, A \otimes B \vdash \Delta}{\Gamma, A, B \vdash \Delta}
\;\;\;
\infer[\oplus_l]{\Gamma, A \oplus B \vdash \Delta}{\Gamma, A \vdash \Delta & \Gamma, B \vdash \Delta}
\;\;\;
\infer[\bindnasrepma_l]{\Gamma, \Gamma', A \bindnasrepma B \vdash \Delta, \Delta'}{\Gamma, A \vdash \Delta & \Gamma', B \vdash \Delta'}
\\
\infer[\binampersand_r]{\Gamma \vdash \Delta, A \binampersand B}{\Gamma \vdash \Delta, A & \Gamma \vdash \Delta, B}
\;\;\;
\infer[\otimes_r]{\Gamma, \Gamma' \vdash \Delta, \Delta', A \otimes B}{\Gamma \vdash \Delta, A & \Gamma' \vdash \Delta', B}
\;\;\;
\infer[\oplus_r]{\Gamma \vdash \Delta, A_1 \oplus A_2}{\Gamma \vdash \Delta, A_i}
\;\;\;
\infer[\bindnasrepma_r]{\Gamma \vdash \Delta, A \bindnasrepma B}{\Gamma \vdash \Delta, A, B}
\end{align*}
}
\vspace{-1cm}
\caption{MALL logical inferences}
\label{fig:mall}
\vspace{-4mm}
\end{figure}

Given the logical inferences of MALL, the permutation cliques found were the
following:
$
CL_1 = \{ \otimes_l, \oplus_l, \bindnasrepma_r, \binampersand_r, \binampersand_l, \oplus_r \}
$ and
$
CL_2 = \{ \otimes_r, \oplus_r, \bindnasrepma_l, \binampersand_l \}
$,
with the relation $CL_1 \downarrow CL_2$. The following focusable permutation partition corresponds to MALLF:
$
C_1 = \{ \otimes_l, \oplus_l, \bindnasrepma_r, \binampersand_r \}  
$ and
$
C_2 = \{ \otimes_r, \oplus_r, \bindnasrepma_l, \binampersand_l \}
$.
Notice that all invertible rules are in $C_1$, while all positive rules are in $C_2$ as expected.

\smallskip

\textbf{LK and LJ}
In order to derive the focused system LKF for classical logic from LK, all
variations of inferences must be considered. We need to take into account the
additive and multiplicative versions of each conjunction and disjunction, as
depicted in Figure \ref{fig:lk}. In principle an analysis could be made with the 
usual presentation of the LK system, but it would certainly not result in LKF. 
Asserting that we can generate a well-known focused system serves as a
validation of our method.

\begin{figure}[t]
{\scriptsize
\begin{align*}
\infer[\wedge^a_l]{\Gamma, A_1 \wedge A_2 \vdash \Delta}{\Gamma, A_i \vdash \Delta}
\;\;\;
\infer[\wedge^m_l]{\Gamma, A \wedge B \vdash \Delta}{\Gamma, A, B \vdash \Delta}
\;\;\;
\infer[\vee^a_l]{\Gamma, A \vee B \vdash \Delta}{\Gamma, A \vdash \Delta & \Gamma, B \vdash \Delta}
\;\;\;
\infer[\vee^m_l]{\Gamma, \Gamma', A \vee B \vdash \Delta, \Delta'}{\Gamma, A \vdash \Delta & \Gamma', B \vdash \Delta'}
\\
\infer[\wedge^a_r]{\Gamma \vdash \Delta, A \wedge B}{\Gamma \vdash \Delta, A & \Gamma \vdash \Delta, B}
\;\;\;
\infer[\wedge^m_r]{\Gamma, \Gamma' \vdash \Delta, \Delta', A \wedge B}{\Gamma \vdash \Delta, A & \Gamma' \vdash \Delta', B}
\;\;\;
\infer[\vee^a_r]{\Gamma \vdash \Delta, A_1 \vee A_2}{\Gamma \vdash \Delta, A_i}
\;\;\;
\infer[\vee^m_r]{\Gamma \vdash \Delta, A \vee B}{\Gamma \vdash \Delta, A, B}
\end{align*}
}
\vspace{-1cm}
\caption{Additive and multiplicative logical inferences of the LK system.}
\label{fig:lk}
\vspace{-4mm}
\end{figure}

The permutation cliques for the inferences in Figure \ref{fig:lk} are:
$
CL_1 = \{ \wedge^a_r, \wedge^m_l, \vee^m_r, \vee^a_l, \wedge^a_l, \vee^a_r \}
$ and
$
CL_2 = \{ \wedge^m_r, \wedge^a_l, \vee^a_r, \vee^m_l \}
$,
where $CL_1 \downarrow CL_2$. Analogous to MALL, we
can drop the two last rules from $CL_1$ and obtain a focusable permutation
partition which corresponds to the propositional fragment of LKF.

By analysing the permutation relation of contraction to the rules in the
partitions, we observe that it permutes up ($\uparrow$ and $\uparrow_c$) all the 
inferences in $CL_{1} \ \{\wedge^a_l, \vee^a_r \}$. Therefore, it is admissible 
in the negative phase. For the positive phase, on the other hand, contraction will
not permute up, for example, $\wedge^a_l$. We can thus conclude that such a
system must have contraction for positive formulas\footnote{This contraction is
implicit on the \emph{decide} rule and the positive rules for the usual 
presentation of LKF.}.

The case for LJ is completely analogous as that of LK when considering the
partition:
$
CL_1 = \{ \wedge^m_l, \wedge^a_r, \vee_l, \rightarrow_r, \wedge^a_l, \vee_r \} 
$ and
$
CL_2 = \{ \wedge^a_l, \wedge^m_r, \vee_r, \rightarrow_l \}
$.

\section{Conclusion}
\label{sec:conc}

This paper proposed a method for automatically devising focused proof systems
for sequent calculi. Our aim was to provide a uniform and automated way to obtain
the sound and complete systems without using an encoding in linear logic.
The main element in our solution is the permutation graph of a
sequent calculus system. By using this graph we can separate the inferences into
positives and negatives and also reason on the admissibility of contraction. The
permutation graph represents the permutation lemmas used in the proof in
\cite{miller07cslb}. We extended the method developed in \cite{miller07cslb} to
handle contraction.


For future work, we plan to apply/extend our technique to other proof systems in
order to obtain sensible focused proof systems. There are, however, some more
foundational challenges in doing so. We would need to extend the conditions used
here for determining whether a partition is focusable. For example,
non-commutative and bunched proof systems have even more complicated structural
restrictions. It is not even clear how would be the focusing discipline for
these proof systems. We expect that our existing machinery may help make some of
these decisions by investigating different partitions.


Although we can deduce in which phase the contraction of formulas is admissible,
it is still unclear if the position of this rule in the permutation graph can
indicate exactly which rules do not admit contraction. We expect to further
investigate the permutation graphs of other systems to find out if this and
other properties can be discovered.

\vspace{-0.2cm}

\bibliographystyle{eptcs}
\bibliography{leanreferences}

\end{document}